\newtheorem{theorem}{Theorem}
\newtheorem{lemma}{Lemma}
\newtheorem{corollary}{Corollary}
\newtheorem{asum}{Assumption}
\newcommand*{\rom}[1]{\expandafter\@slowromancap\romannumeral #1@}
\begin{document}
\title{Quick and Consistent Sparsity Estimation for\\ Streaming Images with Noise}



\author{%
  \IEEEauthorblockN{Tingnan Gong}
  \IEEEauthorblockA{
                    School of Industrial and System Engineering\\
                    Georgia Institute of Technology\\
                    tgong33@gatech.edu}
}

\maketitle

\begin{abstract}
Given fruitful works in the image monitoring, there is a lack of data-driven tools guiding the practitioners to select proper monitoring procedures. The potential model mismatch caused by the arbitrary selection could deviate the empirical detection delay from their theoretical analysis and bias the prognosis. In the image monitoring, the sparsity of the underlying anomaly is one of the attributes on which the development of many monitoring procedures is highly based. This paper proposes a computational-friendly sparsity index,  the corrected Hoyer index, to estimate the sparsity of the underlying anomaly interrupted by noise. We theoretically prove the consistency of the constructed sparsity index. We use simulations to validate the consistency and demonstrate the robustness against the noise. We also provide the insights on how to guide the real applications with the proposed sparsity index. 
\end{abstract}
\section{Introduction} \label{sec:intro}
Among an abundance of monitoring methodologies \cite{qiu2018some}, the image monitoring is drawing attention since the development in sensor technology naturally allows the production line to real-time acquire high-resolution images in terms of high-dimensional matrices. Besides the traditional monitoring procedures, the diagnosis is emerging \cite{reis2017industrial}. As one aspect of the diagnosis, the estimation of the sparsity of the anomaly against the noise can reduce the model mismatch in terms of the deviated sparsity assumed in the applied monitoring procedure. 

The concept of sparsity is common in diverse areas involving image data, such as face recognition \cite{wright2008robust,wang2021deep}, image processing \cite{mairal2007sparse,aharon2008sparse,wright2010sparse,herath2017going} and medical imaging \cite{leung2008sparse, ferrante2017slice}. Also, among the advanced image monitoring procedures \cite{yan2018real,yan2022real,koosha2017statistical,Okhrin2021new,eslami2023spatial}, many of them are developed by assuming the anomalies to be sparse, which is seemingly natural for high-dimensional matrices. This is also known as defect detection since the defect in the production is often mere. The concern of model mismatch exists in that the monitoring procedures assumed to be applicable on sparse anomalies might suffer unexpected large delay on dense anomalies. For example, despite of the remarkable performance of the models in \cite{yan2018real,yan2022real} in detection of sparse anomaly, they allow dynamic background, which might absorb the dense anomaly into the background and delay the detection. 
In the community of computer vision, some literature study with dense anomalies in real applications \cite{sabokrou2015real,sabokrou2017deep}. 
\begin{figure}[ht!]
\centering 
\begin{subfigure}[h]{0.49\linewidth}
\includegraphics[width=\linewidth]{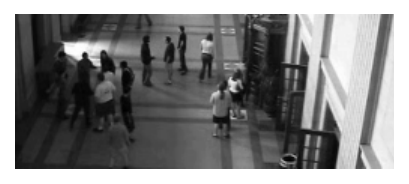}
\caption{The crowded in the hall.}
\end{subfigure}
\begin{subfigure}[h]{0.49\linewidth}
\includegraphics[width=\linewidth]{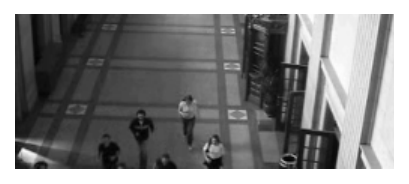}
\caption{The crowded fleeing.}
\end{subfigure}

\begin{subfigure}[h]{0.49\linewidth}
\includegraphics[width=\linewidth]{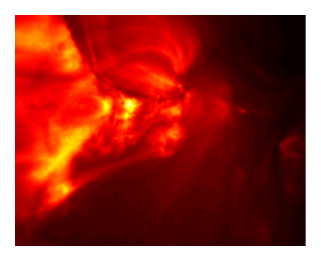}
\caption{Common activity.}
\end{subfigure}
\begin{subfigure}[h]{0.49\linewidth}
\includegraphics[width=\linewidth]{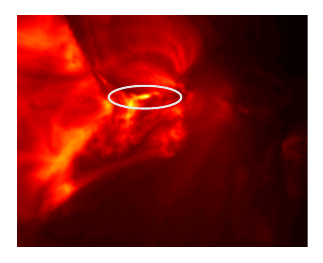}
\caption{Bright flashes.}
\end{subfigure}

\caption{Illustrative examples for dense and sparse anomalies. (a) Normal crowd activity. (b) The anomaly that the people flee from the scene. (c) Normal status of solar flare activity. (d) The solar flare outburst in the form of bright flashes.}
\label{fig:illustrative cases}
\end{figure}

\vspace{0pt}
As an illustrative example, the first row of Figure \ref{fig:illustrative cases} displays the normal pattern for indoor crowd activities on the left while in the right panel, the crowd flee from the scene and most of the pixels in the images have changed, namely a dense anomaly. In Figure \ref{fig:illustrative cases}, the second row displays the solar flare outburst phenomenon. Compared to the normal pattern in the left panel, the outburst in the right panel is visualized as a bright spot in the white circle. The anomaly in the solar flare activities happens within local area, namely a sparse anomaly. 

Our paper aims to consistently estimate the sparsity of the underlying anomaly to shed light on the legitimacy of the sparsity assumptions for the similar industry problems, conserve expensive computational resources and accelerate the future diagnosis and prognosis. The remainder of the paper is organized as follows. Section \ref{sec:problem} describes the problem setup and Section \ref{sec:method} proposes the scheme to compute the sparsity of the anomaly using Hoyer index. Section \ref{sec:theory} theoretically analyzes the bias of Hoyer index under the noise and constructs a corrected Hoyer index against the noise, whose consistency is proved. Section \ref{sec:simulation} designs simulated experiments to validate the robustness and consistency of the corrected Hoyer index. Section \ref{sec:realdata} conducts real data experiments to demonstrate that the corrected Hoyer index could estimate the sparsity of real images consistent with visual observations. 

\section{Problem setup}\label{sec:problem}
We denote the streaming images as  $\mathbf{X}_t\in\mathbb R^{{p_1}\times {p_2}}$ and model them in terms of matrices:
\begin{equation}\label{formula:model}
    \mathbf{X}_t = \bm{\mu}_0 + \mathbb{I}(t>\tau)\mathbf{A}+ \mathbf{e}_t,\quad t=1,2,\ldots,
\end{equation}
where $\bm{\mu}_0\in\mathbb R^{{p_1}\times {p_2}}$ is an in-control (IC) mean, $\mathbb{I}(B)$ is an indicator function equal to one if the event $B$ happens and otherwise zero, the integer $\tau$ is the unknown change point, the matrix $\mathbf A$ is the out-of-control (OOC) anomaly, and $\mathbf{e}_n$ is a noise matrix with randomness. If $\tau=\infty$, there is no change in the sequence of images. To accommodate general scenarios, we assume the entries of $\mathbf{e}_t$ to i.i.d. with zero-mean and stationary variances $\sigma^2$ without any specific constraints on their distributions and temporal correlations. Thus the proposed sparsity estimation method is distribution-free. For the temporal correlation, we have the assumption:
\begin{asum}\label{assump:IC-image-iid}
    $\{\mathbf{X}_t,\ t\le \tau\}$ are i.i.d..
\end{asum}
Assumption~\ref{assump:IC-image-iid} supports the convergence of the sample average of IC images to the true mean. This assumption can be relaxed in that the IC process is often sufficiently studied no matter the images are i.i.d. or not.  OOC images can have temporally varying distributions in $\mathbf{e}_t$ and unknown temporal correlation among $\mathbf{e}_t$. 

\section{Scheme to compute sparsity}\label{sec:method}
We introduce a quantitative index $h(\cdot):\mathbb R^{{p_1}\times {p_2}}\to [0,1]$ to measure the sparsity of the anomaly. \cite{hurley2009comparing} studied a variety of sparsity indices, among which the Hoyer index and Gini index are the best two, possessing the most desirable attributes out of six. Table~3 in \cite{hurley2009comparing} compares the sparsity indices. To be precise, the Gini index is the only one with all six attributes. However, the computation of the Gini index is relatively slow on high-dimensional data. The only failing attribute on Hoyer index is the invariance under data cloning, which is mild with high dimensions. Thus we choose Hoyer index to measure the sparsity of the anomaly. We define the Hoyer index on the individual image $\mathbf X\in\mathbb R^{{p_1}\times {p_2}}$ as:
\begin{equation}\label{formula:hoyer index}
    h(\mathbf X) = \left(\sqrt{p_1p_2} - \frac{\left|\sum_{i=1}^{p_1}\sum_{j=1}^{p_2} \mathbf X_{ij}\right|}{\|\mathbf X\|_F}\right) \left(\sqrt{p_1p_2}-1\right)^{-1}, 
\end{equation}
where $\|\cdot\|_F$ represents the Frobenius norm. To ensure that the Hoyer index in \eqref{formula:hoyer index} really reflect the sparsity of the anomaly $\mathbf A$, we have the assumption:
\begin{asum}\label{assump:non-nega-A}
    The entries of the anomaly $\mathbf A$, namely $\mathbf A_{ij}$, are same-sign. 
\end{asum}
Assumption \ref{assump:non-nega-A} is not artificial because many anomalies are one-sided, including the outburst of solar flare, the flee of the crowd, the missing spraying in the battery coating process and so on. If $h(\mathbf A) = 1$ holds, the anomaly is the most sparse with at most one non-zero entry. On the contrary, if $h(\mathbf A) = 0$ holds, the anomaly is the least sparse with all the entries to be the same. 
\subsection{Compute sparsity during monitoring}
Suppose a randomly selected monitoring procedure raises an alarm at time $\hat \tau > \tau$. First we estimate the IC mean $\bm \mu_0$ by $\hat {\bm \mu}_0$. Then we compute the residual matrices on a window of OOC images $\{\mathbf R_{\hat\tau+i} = \mathbf X_{\hat\tau+i} - \hat{\bm \mu}_0, \ i = 1,\ldots, w\}$. We perform Hoyer index on the sample average of the residual matrices.
Algorithm \ref{algorithm:quick diagnosis} displays the concrete steps. 


\begin{algorithm}[H]
\caption{Scheme to compute Hoyer index online}
\label{algorithm:quick diagnosis}
{\it Input}: Window lengths $w_0,w$, IC images $\{\mathbf X_{-i+1}:\ i=1,\ldots,w_0\}$, OOC images $\{\mathbf X_{\hat\tau+i:}\ i=1,\ldots,w\}$.
\begin{algorithmic}[1]
    \State   Estimate IC mean $\hat {\bm\mu}_0 = \sum_{i=1}^{w_0} \mathbf X_{-i+1}/w_0$. 
    \State  Compute residual matrices for OOC images: 
    $$
    \mathbf R_{\hat\tau+i} = \mathbf X_{\hat\tau+i}-\hat{\bm\mu}_0, \quad i=1,\ldots,w.
    $$ 
    \State 	Perform
    $h\left(\sum_{i=1}^{w} \mathbf R_{\hat\tau+i}/w\right)$ with the Hoyer index in \eqref{formula:hoyer index}.
\end{algorithmic}
\end{algorithm}


\section{Theoretical Analysis}\label{sec:theory}
In this section, we discuss the condition that Hoyer index is unbiased and the remedy when the Hoyer index is deviated by the noise. 

Algorithm \ref{algorithm:quick diagnosis} could work when there is no noise or the noise to signal ratio is small. When the noise is large, we first consider a simple scenario where $\{\mathbf X_t:\ t>\tau\}$ are i.i.d.. 
\begin{lemma}
\label{lem:ooc-iid}
Suppose $\{\mathbf X_t:t>\tau\}$ are i.i.d., then we have
\begin{equation}
    h\left(\sum_{i=1}^{w} \mathbf R_{\hat\tau+i}/w\right)\xrightarrow{a.s.}{} h(\mathbf A) \quad \text{as }w\to\infty.
\end{equation}
\end{lemma}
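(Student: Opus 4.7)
The plan is to decompose the sample average of residuals, apply the strong law of large numbers to the noise, and then conclude by the continuous mapping theorem using continuity of the Hoyer index.

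First, I would write
\[
\frac{1}{w}\sum_{i=1}^{w} \mathbf R_{\hat\tau+i} \;=\; \mathbf A \;+\; \frac{1}{w}\sum_{i=1}^{w} \mathbf e_{\hat\tau+i} \;+\; (\bm{\mu}_0 - \hat{\bm{\mu}}_0),
\]
using the model $\mathbf X_{\hat\tau+i} = \bm\mu_0 + \mathbf A + \mathbf e_{\hat\tau+i}$ for $\hat\tau + i > \tau$. Because $\{\mathbf X_t : t>\tau\}$ are i.i.d., the noise increments $\{\mathbf e_{\hat\tau+i}\}$ are i.i.d., zero-mean, with finite variance $\sigma^2$. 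Applying Kolmogorov's SLLN entrywise yields $w^{-1}\sum_{i=1}^{w} \mathbf e_{\hat\tau+i} \to \mathbf 0$ almost surely as $w\to\infty$. Under Assumption~\ref{assump:IC-image-iid}, the same SLLN (applied to the IC window) gives $\hat{\bm\mu}_0 \to \bm\mu_0$ a.s.\ once $w_0$ is large, so the third term is either zero in the idealized reading of the simple scenario or negligible in the joint-limit reading. Either way, $w^{-1}\sum_{i=1}^{w}\mathbf R_{\hat\tau+i}\to \mathbf A$ a.s.\ entrywise.

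Second, I would invoke continuity. The Hoyer index $h$ is a composition of continuous maps ($L^1$-sum, Frobenius norm, ratio, affine scaling) and is continuous at every matrix with nonzero Frobenius norm. Since the anomaly $\mathbf A$ is by construction nonzero (and, under Assumption~\ref{assump:non-nega-A}, in fact $\|\mathbf A\|_F$ is bounded away from zero), $h$ is continuous at $\mathbf A$. The continuous mapping theorem then lifts the a.s.\ convergence of the residual average to a.s.\ convergence of its Hoyer index:
\[
h\!\left(\frac{1}{w}\sum_{i=1}^{w} \mathbf R_{\hat\tau+i}\right) \xrightarrow{a.s.} h(\mathbf A).
\]

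The main obstacle, as I see it, is being careful about the role of $\hat{\bm\mu}_0$: the limit is taken only in $w$, so the term $(\bm\mu_0 - \hat{\bm\mu}_0)$ does not shrink through the stated limit alone. I would resolve this either by reading the statement as treating the IC mean as consistently pre-estimated (which is reasonable since the authors' narrative describes the IC process as "sufficiently studied") or by letting $w_0\to\infty$ jointly so that $\hat{\bm\mu}_0\to\bm\mu_0$ a.s.\ by Assumption~\ref{assump:IC-image-iid}. A secondary technical point is simply to verify that $h$ stays continuous along the sample path, which is automatic once one notes that the limiting matrix $\mathbf A$ has strictly positive Frobenius norm, so the denominator in \eqref{formula:hoyer index} does not approach zero.
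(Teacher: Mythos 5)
Your proposal is correct and follows essentially the same route the paper intends: the paper gives no written proof beyond the remark that the lemma ``is simply proved by the Law of Large Number,'' and your argument is exactly that sketch made precise --- SLLN on the i.i.d.\ noise average plus continuity of $h$ at $\mathbf A$ (valid since $\|\mathbf A\|_F>0$) via the continuous mapping theorem. Your observation that the $(\bm\mu_0-\hat{\bm\mu}_0)$ term does not vanish in the $w$-only limit, and must be handled either by treating the IC mean as consistently pre-estimated or by letting $w_0\to\infty$ as well, is a legitimate point the paper glosses over, and your resolution is adequate.
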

Lemma \ref{lem:ooc-iid} is simply proved by the Law of Large Number (LLN). However, the applicable conditions narrow with the i.i.d. assumption on the OOC images. Afterwards, we consider a single OOC image with large dimensions to circumvent any assumption on the distributions of OOC images.

Lemma \ref{lem:sparsity of white noise} shows that the Hoyer index deems a white noise matrix as very sparse in an asymptotic sense. In other words, the white noise matrix is non-informative with its Hoyer index nearly one. 
\begin{lemma}[Convergence rate of white noise matrix]
\label{lem:sparsity of white noise}
Suppose the entries of a white noise matrix $\mathbf e\in\mathbb R^{{p_1}\times {p_2}}$ are i.i.d. with zero mean and variances $\sigma^2$. Then we have
\begin{equation}
    1-h(\mathbf e) = \operatorname{O}_{a.s.}\left(\sqrt{\frac{\log\log p_1p_2}{p_1p_2}}\right).
\end{equation}
\end{lemma}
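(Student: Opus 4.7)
The plan is to analyze $1-h(\mathbf{e})$ by rewriting the definition in \eqref{formula:hoyer index} into a form that isolates the numerator sum from the Frobenius norm denominator, and then to apply classical limit theorems to each piece separately. Setting $N = p_1 p_2$ and $S_N = \sum_{i=1}^{p_1}\sum_{j=1}^{p_2}\mathbf{e}_{ij}$, straightforward algebra on \eqref{formula:hoyer index} yields the identity
$$
1 - h(\mathbf{e}) = \frac{|S_N|/\|\mathbf{e}\|_F - 1}{\sqrt{N} - 1},
$$
which reduces the lemma to controlling $|S_N|/\|\mathbf{e}\|_F$ almost surely and confirming that the $-1$ in the numerator is dominated by this quantity.

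For the numerator, I would enumerate the entries $\{\mathbf{e}_{ij}\}$ in any fixed order as an i.i.d. sequence $\xi_1, \xi_2, \ldots$ with zero mean and variance $\sigma^2$, and invoke the Hartman--Wintner Law of the Iterated Logarithm to conclude $|S_N| = \operatorname{O}_{a.s.}(\sqrt{N \log\log N})$. For the denominator, I would apply the SLLN to the i.i.d. sequence $\{\xi_k^2\}$, whose common mean $\sigma^2$ is finite, to obtain $\|\mathbf{e}\|_F^2 / N \xrightarrow{a.s.} \sigma^2$; consequently $\|\mathbf{e}\|_F = \sigma\sqrt{N}\bigl(1 + o_{a.s.}(1)\bigr)$ and is eventually bounded below by, say, $(\sigma/2)\sqrt{N}$ with probability one.

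Combining these two facts yields $|S_N|/\|\mathbf{e}\|_F = \operatorname{O}_{a.s.}(\sqrt{\log\log N})$, which dominates the $-1$ in the numerator; dividing by $\sqrt{N}-1$ then produces the claimed rate $\operatorname{O}_{a.s.}(\sqrt{\log\log N / N})$. I do not foresee a serious obstacle here: the LIL and the SLLN do all of the probabilistic work, and the only bookkeeping required is to argue that the zero-probability event $\|\mathbf{e}\|_F = 0$ is asymptotically negligible and that Hartman--Wintner applies under only a finite second moment, which is precisely what is assumed. If one wanted to push further, the delicate point would be verifying that the LIL \emph{rate} (rather than merely $o(\sqrt{N})$) is needed to obtain the $\sqrt{\log\log N}$ factor; any weaker bound such as the CLT would give only $\operatorname{O}_{a.s.}$ in probability rather than almost surely.
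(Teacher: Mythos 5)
Your proposal is correct and follows essentially the same route as the paper's proof: rewrite $1-h(\mathbf e)$ as $\bigl(|\sum_{i,j}e_{ij}|/\|\mathbf e\|_F-1\bigr)/(\sqrt{p_1p_2}-1)$, control the numerator sum by the Hartman--Wintner LIL and the Frobenius norm by the SLLN, and combine. Your added bookkeeping (the eventual lower bound on $\|\mathbf e\|_F$ and the dominance over the $-1$ term) only makes explicit what the paper leaves implicit.
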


Theorem~\ref{thm:affection of white noise} shows that the white noise laid on the anomaly could deviate the Hoyer index.
\begin{theorem}\label{thm:affection of white noise}
We denote the asymptotic average magnitude of the anomaly and the square of the anomaly as $\bar a = \lim_{{p_1},{p_2}\to\infty} \left|\sum_{i=1}^{p_1}\sum_{j=1}^{p_2} A_{ij}\right|/p_1p_2$ and $\bar {a^2} = \lim_{{p_1},{p_2}\to\infty} \|\mathbf A\|_F^2/p_1p_2>0$, respectively.
For a white noise matrix $\mathbf e$ in Lemma \ref{lem:sparsity of white noise}, 
we have
\begin{equation}
\begin{aligned}
    h(\mathbf A+\mathbf e) - h(\mathbf A) \xrightarrow{a.s.}{}& \frac{\bar{a} \sigma^2}{\sqrt{\bar{a^2}\left(\bar{a^2}+\sigma^2\right)}\left(\sqrt{\bar{a^2}}+\sqrt{\bar{a^2}+\sigma^2}\right)}\\
    &\quad\text{as}\quad {p_1},{p_2} \to \infty.
\end{aligned}
\end{equation}
\end{theorem}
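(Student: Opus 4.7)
\medskip

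The plan is to expand the difference of Hoyer indices, isolate the common prefactor $(\sqrt{p_1 p_2}-1)^{-1}$, and then show that both ratios $|\sum X_{ij}|/\|\mathbf{X}\|_F$ (for $\mathbf{X}=\mathbf{A}$ and $\mathbf{X}=\mathbf{A}+\mathbf{e}$) grow like $\sqrt{p_1 p_2}$, with explicit constants determined by $\bar{a}$, $\bar{a^2}$, and $\sigma^2$. Concretely, from \eqref{formula:hoyer index},
\begin{equation*}
h(\mathbf{A}+\mathbf{e})-h(\mathbf{A}) \;=\; \frac{1}{\sqrt{p_1 p_2}-1}\Bigl[\tfrac{|\sum A_{ij}|}{\|\mathbf{A}\|_F}-\tfrac{|\sum(A_{ij}+e_{ij})|}{\|\mathbf{A}+\mathbf{e}\|_F}\Bigr],
\end{equation*}
so the problem reduces to computing the $\sqrt{p_1 p_2}$-order terms of each ratio.

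First I would handle $\mathbf{A}$ alone. Under Assumption~\ref{assump:non-nega-A} the entries are same-sign, so $|\sum A_{ij}|=\sum|A_{ij}|$, and the hypotheses of the theorem give $|\sum A_{ij}|/(p_1 p_2)\to \bar{a}$ and $\|\mathbf{A}\|_F^2/(p_1 p_2)\to \bar{a^2}$. Hence $|\sum A_{ij}|/(\|\mathbf{A}\|_F\sqrt{p_1 p_2})\to \bar{a}/\sqrt{\bar{a^2}}$. Next I would handle $\mathbf{A}+\mathbf{e}$ by separately controlling the numerator and the denominator. For the numerator, write $\sum(A_{ij}+e_{ij})=\sum A_{ij}+\sum e_{ij}$; since $\sum e_{ij}$ is an i.i.d.\ sum with zero mean and finite variance, the law of the iterated logarithm gives $\sum e_{ij}=O_{a.s.}(\sqrt{p_1 p_2\log\log(p_1 p_2)})$, hence $|\sum(A_{ij}+e_{ij})|/(p_1 p_2)\to \bar{a}$ almost surely. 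For the denominator,
\begin{equation*}
\|\mathbf{A}+\mathbf{e}\|_F^{\,2} = \|\mathbf{A}\|_F^{\,2} + 2\sum_{i,j} A_{ij}e_{ij} + \|\mathbf{e}\|_F^{\,2}.
\end{equation*}
The third term divided by $p_1 p_2$ converges to $\sigma^2$ a.s.\ by the usual LLN applied to $e_{ij}^2$. The cross term is a sum of independent zero-mean variables with variance bound $\sigma^2\sum A_{ij}^2=O(p_1 p_2)$, so by Kolmogorov's strong law for weighted sums it is $o_{a.s.}(p_1 p_2)$. Combining, $\|\mathbf{A}+\mathbf{e}\|_F^{\,2}/(p_1 p_2)\to \bar{a^2}+\sigma^2$ a.s., and thus $|\sum(A_{ij}+e_{ij})|/(\|\mathbf{A}+\mathbf{e}\|_F\sqrt{p_1 p_2})\to \bar{a}/\sqrt{\bar{a^2}+\sigma^2}$.

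Finally, using $(\sqrt{p_1 p_2}-1)^{-1}\sqrt{p_1 p_2}\to 1$, the displayed difference converges almost surely to
\begin{equation*}
\frac{\bar{a}}{\sqrt{\bar{a^2}}}-\frac{\bar{a}}{\sqrt{\bar{a^2}+\sigma^2}} \;=\; \frac{\bar{a}\,\bigl(\sqrt{\bar{a^2}+\sigma^2}-\sqrt{\bar{a^2}}\bigr)}{\sqrt{\bar{a^2}(\bar{a^2}+\sigma^2)}},
\end{equation*}
and rationalizing the numerator by multiplying by $(\sqrt{\bar{a^2}+\sigma^2}+\sqrt{\bar{a^2}})/(\sqrt{\bar{a^2}+\sigma^2}+\sqrt{\bar{a^2}})$ turns $\sqrt{\bar{a^2}+\sigma^2}-\sqrt{\bar{a^2}}$ into $\sigma^2/(\sqrt{\bar{a^2}+\sigma^2}+\sqrt{\bar{a^2}})$, which is precisely the advertised limit.

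The main obstacle I anticipate is the cross term $\sum A_{ij} e_{ij}$: the entries $A_{ij}$ are deterministic but arbitrary, so one cannot simply invoke the classical LLN and instead needs a weighted strong law, which in turn requires the mild side condition $\sum A_{ij}^2/(p_1 p_2)=O(1)$ implied by the existence of $\bar{a^2}$. Propagating the $o_{a.s.}(1)$ remainders through the square root in the denominator and through the absolute value in the numerator (where one must argue that $\mathrm{sign}(\sum A_{ij}+\sum e_{ij})$ eventually equals $\mathrm{sign}(\sum A_{ij})$ when $\bar{a}>0$, while the limit is trivially $0$ when $\bar{a}=0$) is routine but requires some care.
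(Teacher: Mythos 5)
Your proposal is correct, and it rests on the same three limits as the paper's proof: $\bigl|\sum A_{ij}\bigr|/(p_1p_2)\to\bar a$, $\|\mathbf A\|_F^2/(p_1p_2)\to\bar{a^2}$, and $\|\mathbf A+\mathbf e\|_F^2/(p_1p_2)\to\bar{a^2}+\sigma^2$, followed by the same rationalization. The bookkeeping differs, though. The paper proves the statement for the residual $\mathbf R_t=\mathbf A+\mathbf e_t+(\bm\mu_0-\hat{\bm\mu}_0)$, so it carries an extra estimation-error term $d_{ij}$, and it proceeds by inserting the intermediate quantity $\bigl|\sum R_{tij}\bigr|/\|\mathbf A\|_F$, bounding the difference above and below, and sandwiching via $\limsup$ and $\liminf$; you instead compute the limit of each normalized ratio $\bigl|\sum X_{ij}\bigr|/\bigl(\|\mathbf X\|_F\sqrt{p_1p_2}\bigr)$ directly and subtract, which is cleaner and matches the theorem statement (which has no $d_{ij}$). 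Your treatment of the cross term $\sum A_{ij}e_{ij}$ via a weighted strong law under $\sum A_{ij}^2=O(p_1p_2)$ is actually tighter than the paper's, which only argues $\mathbb L^2$ (hence in-probability) convergence for that term and for $\bigl|\sum R_{tij}\bigr|/(p_1p_2)$ even though the theorem asserts almost-sure convergence; so your route repairs a small mode-of-convergence slippage. One simplification: the sign discussion at the end is unnecessary, since $\bigl|\,|x+y|-|x|\,\bigr|\le|y|$ applied with $x=\sum A_{ij}$, $y=\sum e_{ij}$ gives $\bigl|\sum(A_{ij}+e_{ij})\bigr|/(p_1p_2)\to\bar a$ a.s.\ regardless of whether $\bar a>0$ or $\bar a=0$. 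The only point both you and the paper gloss over is that the a.s.\ statements for the doubly indexed limit $p_1,p_2\to\infty$ implicitly assume the noise entries form a nested infinite array, so this does not distinguish the two arguments.
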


Theorem \ref{thm:affection of white noise} analytically characterize the affection of the noise matrix $\mathbf e_t$ on evaluation of the sparsity of the anomaly. The bias is related to the signals aggregated from $\mathbf A$ and the noise level. As a sanity check, when the anomaly vanishes, the term $\bar{a} = 0$ and then the asymptotic difference between $h(\mathbf A + \mathbf e)$ and $h(\mathbf A)$ also vanishes. This result coincides with Lemma \ref{lem:sparsity of white noise}, namely a white noise matrix is asymptotically as sparse as a blank matrix. If the noise vanishes and $\sigma^2$ decreases to 0, the difference between $h(\mathbf R_t)$ and $h(\mathbf A)$ also converges to 0, which is a naive circumstance. If the noise diverges to infinite, we can have the below corollary.
\begin{corollary}
\label{coro:divergent noise}
With conditions in Theorem \ref{thm:affection of white noise}, if as ${p_1},{p_2}\to\infty$, the individual variance $\sigma^2\to\infty$, then we have 
\begin{equation}
    h(\mathbf A+\mathbf e)\xrightarrow{a.s.}{} 1 \quad\text{as}\quad {p_1},{p_2} \to \infty.
\end{equation}
\end{corollary}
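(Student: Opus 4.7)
The plan is to derive the corollary as a clean algebraic consequence of Theorem~\ref{thm:affection of white noise}. First I would rewrite the limiting bias in a telescoped form. Introducing $u=\sqrt{\bar{a^2}}$ and $v=\sqrt{\bar{a^2}+\sigma^2}$ so that $\sigma^2=(v-u)(v+u)$, the bias collapses as
$$\frac{\bar{a}\sigma^2}{\sqrt{\bar{a^2}(\bar{a^2}+\sigma^2)}\bigl(\sqrt{\bar{a^2}}+\sqrt{\bar{a^2}+\sigma^2}\bigr)}=\frac{\bar{a}(v-u)}{uv}=\frac{\bar{a}}{\sqrt{\bar{a^2}}}-\frac{\bar{a}}{\sqrt{\bar{a^2}+\sigma^2}},$$
a difference of two elementary terms.

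Second, I would compute $\lim_{p_1,p_2\to\infty} h(\mathbf{A})$ directly from definition~\eqref{formula:hoyer index}. The definitions of $\bar{a}$ and $\bar{a^2}$ give $|\sum_{i,j} A_{ij}|/\|\mathbf{A}\|_F \sim \bar{a}\sqrt{p_1p_2}/\sqrt{\bar{a^2}}$, so $h(\mathbf{A})\to 1-\bar{a}/\sqrt{\bar{a^2}}$. Adding this asymptotic value to the simplified bias from the first step yields
$$h(\mathbf{A}+\mathbf{e})\xrightarrow{a.s.}{} 1-\frac{\bar{a}}{\sqrt{\bar{a^2}+\sigma^2}},$$
and letting $\sigma^2\to\infty$ produces the claimed limit of $1$.

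The main obstacle I anticipate is that Theorem~\ref{thm:affection of white noise} is formulated with $\sigma^2$ fixed while $p_1,p_2\to\infty$, whereas the corollary demands a joint regime in which $\sigma^2$ itself diverges with the dimensions. To justify the interchange of limits rigorously, I would revisit the almost-sure limits underlying the proof of Theorem~\ref{thm:affection of white noise}---specifically $\|\mathbf{e}\|_F^2/(p_1 p_2 \sigma^2)\to 1$, $\langle\mathbf{A},\mathbf{e}\rangle/(p_1 p_2)\to 0$, and $\sum_{i,j} e_{ij}/(p_1 p_2)\to 0$---and verify via Chebyshev / Borel--Cantelli estimates that they continue to hold along any sequence $\sigma^2=\sigma^2(p_1 p_2)\to\infty$ growing at a sub-dimensional rate. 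Alternatively, a diagonal subsequence argument can select $\sigma_k\to\infty$ slowly enough for Theorem~\ref{thm:affection of white noise} to apply at each $\sigma_k$, after which monotonicity of $1-\bar{a}/\sqrt{\bar{a^2}+\sigma^2}$ in $\sigma$ delivers the conclusion. This limit-interchange step is the only nontrivial point; the remainder is purely algebraic.
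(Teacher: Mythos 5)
Your argument is correct and essentially the paper's own: the paper likewise combines $\lim_{p_1,p_2\to\infty}h(\mathbf A)=1-\bar a/\sqrt{\bar{a^2}}$ with the limiting bias from Theorem~\ref{thm:affection of white noise}, which tends to $\bar a/\sqrt{\bar{a^2}}$ as $\sigma^2\to\infty$ (your telescoping identity is just a cleaner way of seeing that limit). Note, however, that the paper sidesteps your limit-interchange worry by proving only the iterated limit $\lim_{\sigma^2\to\infty}\lim_{p_1,p_2\to\infty}$, so your proposed Borel--Cantelli/diagonal-subsequence treatment of the joint regime $\sigma^2=\sigma^2(p_1p_2)\to\infty$ is, if anything, more careful than the published proof.
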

Corollary \ref{coro:divergent noise} illustrates that with sufficiently large noise, the anomaly can be dominated in the noise. In such a case, for any dense anomaly, the conventional Hoyer index can indicate it to be sparse. This is consistent with the visualizations in Figure \ref{fig:white noise}, where the growing noise gradually overwhelms the ring-like anomaly pattern. \begin{figure}[ht!]
\centering 
\begin{subfigure}[h]{0.3\linewidth}
\includegraphics[width=\linewidth]{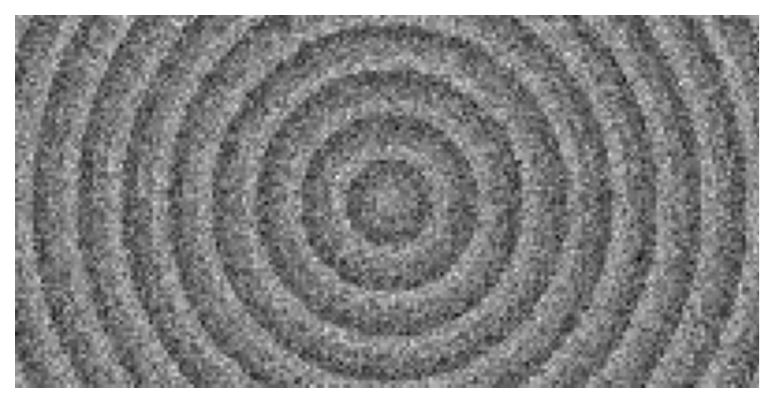}
\caption{Low noise}
\end{subfigure}
\begin{subfigure}[h]{0.3\linewidth}
\includegraphics[width=\linewidth]{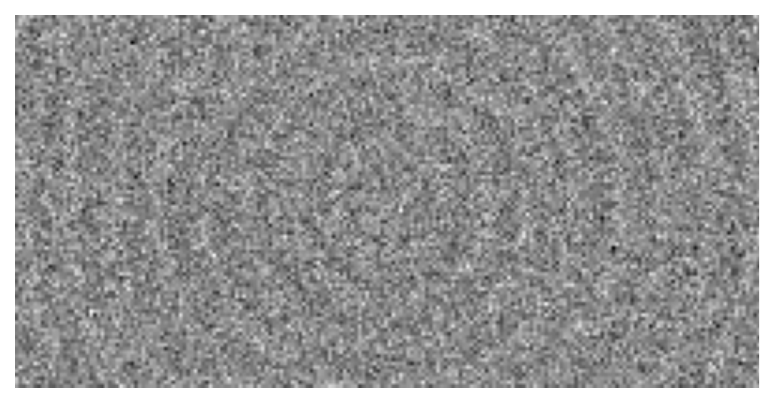}
\caption{Medium noise}
\end{subfigure}
\begin{subfigure}[h]{0.3\linewidth}
\includegraphics[width=\linewidth]{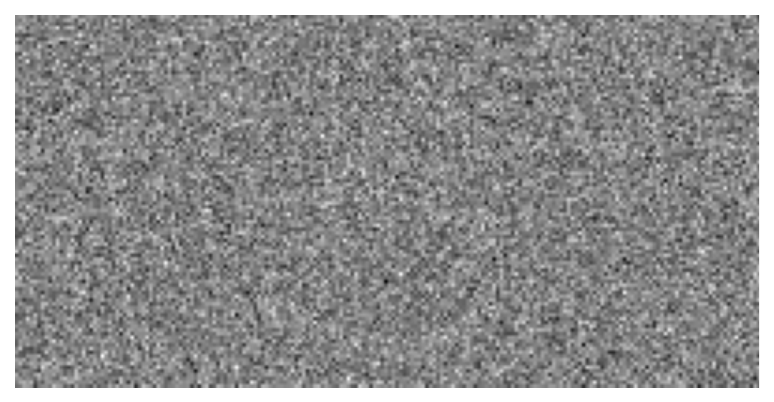}
\caption{High noise}
\end{subfigure}
\caption{Visualizations showing growing noise overwhelming the ring-like anomaly pattern. }
\label{fig:white noise}
\end{figure}

Since the conventional Hoyer index is not consistent, we define the corrected Hoyer index on OOC residual matrices. Namely, for $t>\tau$, we define
\begin{equation}\label{formula:corrected Hoyer index}
    g(\mathbf R_t) = h(\mathbf R_t) - \frac{\bar{a} \sigma^2}{\sqrt{\bar{a^2}\left(\bar{a^2}+\sigma^2\right)}\left(\sqrt{\bar{a^2}}+\sqrt{\bar{a^2}+\sigma^2}\right)}.
\end{equation}
With LLN on $\hat{\bm\mu}_0$ and Theorem \ref{thm:affection of white noise}, we have the consistency of $g(\mathbf R_t)$.
\begin{corollary}\label{coro:consistent remedy}
    With conditions in Theorem \ref{thm:affection of white noise}, the corrected Hoyer index $g(\mathbf R_t)$ is a consistent estimator of $h(\mathbf A)$ w.r.t. $w_0,p_1,p_2$. 
\end{corollary}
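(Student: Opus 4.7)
The plan is to rewrite the residual $\mathbf R_t$ as $\mathbf A$ plus a white-noise matrix with a tractable variance so that Theorem~\ref{thm:affection of white noise} applies directly, and then drive the variance of that effective noise to $\sigma^2$ by taking $w_0\to\infty$, so that the subtracted bias in~\eqref{formula:corrected Hoyer index} matches the bias produced by Theorem~\ref{thm:affection of white noise}.

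First I would substitute~\eqref{formula:model} into the residual defined in Algorithm~\ref{algorithm:quick diagnosis}. Step~1 of the algorithm yields $\hat{\bm\mu}_0 = \bm\mu_0 + \bar{\mathbf e}$, where $\bar{\mathbf e} := (1/w_0)\sum_{i=1}^{w_0}\mathbf e_{-i+1}$ is the average of the IC noise matrices. Hence, for each $t > \hat\tau > \tau$,
\[
\mathbf R_t \;=\; \mathbf A + (\mathbf e_t - \bar{\mathbf e}) \;=:\; \mathbf A + \tilde{\mathbf e}_t.
\]
Under Assumption~\ref{assump:IC-image-iid} together with the standing i.i.d.\ specification on the entries of $\mathbf e_t$, and with the OOC noise independent of the IC stream, the entries of $\tilde{\mathbf e}_t$ are i.i.d.\ across pixels, zero-mean, with variance $\tilde\sigma^2 := \sigma^2(1 + 1/w_0)$. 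Thus $\tilde{\mathbf e}_t$ satisfies the hypotheses of Theorem~\ref{thm:affection of white noise} with $\tilde\sigma^2$ playing the role of $\sigma^2$.

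Applying Theorem~\ref{thm:affection of white noise} then gives, for each fixed $w_0$,
\[
h(\mathbf R_t) - h(\mathbf A) \;\xrightarrow{a.s.}\; B(\tilde\sigma^2) \quad\text{as}\quad p_1,p_2 \to \infty,
\]
where $B(\sigma^2)$ denotes the closed-form expression on the right-hand side of Theorem~\ref{thm:affection of white noise}. The map $\sigma^2 \mapsto B(\sigma^2)$ is continuous (indeed locally Lipschitz on bounded ranges where $\bar{a^2} > 0$), so letting $w_0 \to \infty$ sends $\tilde\sigma^2 \to \sigma^2$ and $B(\tilde\sigma^2) \to B(\sigma^2)$. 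Since $B(\sigma^2)$ is precisely the correction subtracted in~\eqref{formula:corrected Hoyer index}, we conclude
\[
g(\mathbf R_t) \;=\; h(\mathbf R_t) - B(\sigma^2) \;\xrightarrow{a.s.}\; h(\mathbf A).
\]

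The main obstacle is upgrading this iterated limit to the joint a.s.\ limit in $(w_0,p_1,p_2)$ demanded by the statement of consistency ``w.r.t.\ $w_0, p_1, p_2$''. I expect to handle this with a diagonal argument: the convergence in Theorem~\ref{thm:affection of white noise} is driven by the $O_{a.s.}(\sqrt{\log\log(p_1 p_2)/(p_1 p_2)})$ rate from Lemma~\ref{lem:sparsity of white noise}, while $|B(\tilde\sigma^2) - B(\sigma^2)| = O(1/w_0)$ from the Lipschitz bound, so choosing $w_0 = w_0(p_1,p_2) \to \infty$ slowly enough relative to $p_1 p_2$ combines both estimates into a single a.s.\ bound that vanishes along any joint sequence $w_0, p_1, p_2 \to \infty$.
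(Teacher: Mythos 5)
Your proposal is correct, and it reaches the conclusion by a somewhat different bookkeeping than the paper. The paper's own argument is a one-liner: it invokes the LLN on $\hat{\bm\mu}_0$ together with Theorem~\ref{thm:affection of white noise}, whose proof already carries the mean-estimation error $d_{ij}=(\bm\mu_0-\hat{\bm\mu}_0)_{ij}$ inside the residual and dismisses those terms as ``straightforward'' (which implicitly needs $w_0\to\infty$, since for fixed $w_0$ the average of $d_{ij}^2$ contributes an extra $\sigma^2/w_0$ to the limit of $\|\mathbf R_t\|_F^2/p_1p_2$). You instead absorb the estimation error exactly, writing $\mathbf R_t=\mathbf A+(\mathbf e_t-\bar{\mathbf e})$ with effective i.i.d.\ noise of variance $\sigma^2(1+1/w_0)$, apply Theorem~\ref{thm:affection of white noise} verbatim with $\tilde\sigma^2$, and then use continuity of the bias map $B(\cdot)$ (valid since $\bar{a^2}>0$) to send $w_0\to\infty$. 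This buys a quantitative $O(1/w_0)$ control of the residual bias and makes explicit the role of $w_0$ that the paper leaves implicit; the paper's route buys brevity. Two small caveats: your argument needs the OOC noise to be independent of the IC stream, which the paper never states explicitly (though it is clearly intended); and your diagonal argument for the genuinely joint limit in $(w_0,p_1,p_2)$ is only a sketch, since Theorem~\ref{thm:affection of white noise} is stated without a rate and a fully joint a.s.\ statement would need that rate to be uniform in $w_0$ (or a similar uniformity device). The iterated-limit version you establish is solid and already matches, indeed exceeds, the level of rigor of the paper's one-sentence justification.
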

With Corollary \ref{coro:consistent remedy}, we summarize the corrected scheme to compute the sparsity in Algorithm \ref{algorithm:corrected quick diagnosis}. 

\begin{algorithm}[H]
\caption{A corrected scheme to compute Hoyer index}
\label{algorithm:corrected quick diagnosis}
{\it Input}: Window lengths $w_0$, IC images $\{\mathbf X_{-i+1}:\ i=1,\ldots,w_0\}$, OOC image $\mathbf X_{\hat\tau+1}$.
\begin{algorithmic}[1]
    \State   Estimate IC mean $\hat {\bm\mu}_0 = \sum_{i=1}^{w_0} \mathbf X_{-i+1}/w_0$. 
    \State   Compute IC residual matrices: 
    $$
    \mathbf R_{-i+1} = \mathbf X_{-i+1}-\hat{\bm\mu}_0,\quad i = 1,\ldots, w_0.
    $$ 
    \State Estimate $\sigma^2$ by the sample variance of all entries in IC images $\{\mathbf R_{-i+1}:\ i=1,\ldots,w_0\}$, denoted as $\hat\sigma^2$.
    \State   Compute the OOC residual matrix: 
    $$
    \mathbf R_{\hat \tau} = \mathbf X_{\hat\tau}-\hat{\bm\mu}_0.
    $$ 

    \State Estimate $\bar{a}$ and $\bar{a^2}$ with:  
    $$
    \hat a = \left|\sum_{i=1}^{p_1}\sum_{j=1}^{p_2} R_{\hat\tau,i,j}\right|/p_1p_2,\quad \widehat {a^2} = \|\mathbf R_{\hat\tau}\|_F^2/p_1p_2.
    $$
    \State Compute
    $g(\mathbf R_{\hat\tau})$ with the corrected Hoyer index in \eqref{formula:corrected Hoyer index}.
\end{algorithmic}
\end{algorithm}

\section{Simulated Experiments}\label{sec:simulation}
In this section, we design simulated experiments for two purposes. First, the corrected Hoyer index is able to estimate the sparsity of the anomaly despite of the noise. Second, it is consistent w.r.t. the dimensions. 

\subsection{Robustness}
We set the dimensions of the anomaly to be ${p_1} = 100, {p_2} = 200$. For the dense anomaly, we design it to be: 
\begin{equation*}
    A_{ij} = \lfloor (j-1)/50 \rfloor, \quad i = 1,\ldots,100, \ j = 1,\ldots,200. 
\end{equation*}
For the sparse anomaly, we define as:
\begin{equation*}
    A_{ij} = 5\mathbb I (50\leq j <60), \quad i = 1,\ldots,100, \ j = 1,\ldots,200. 
\end{equation*}
Figure \ref{fig:dense} and \ref{fig:sparse} visualize the two types of anomalies. Since the estimation of $\hat{\bm\mu}_0$ is not the focus of the paper, given the anomaly $\mathbf{A}$ and the standard deviation $\sigma\in\{0.5,1.0,\ldots, 5.5,6.0\}$, we directly simulate IC and OOC residual matrices: 
\begin{equation}\label{formula:pair of residual}
    \mathbf{R}_t = \mathbb{I}(t>0)\mathbf{A}+ \mathbf{e}_t,\quad t=-199,\ldots,200.
\end{equation}
where $\mathbf e_t$ are white noise matrices with entries i.i.d. following normal $N(0,\sigma^2)$. 
For each OOC residual matrix in $\{\mathbf R_t, \ t>0\}$, we compute its corrected Hoyer index through Algorithm \ref{algorithm:corrected quick diagnosis} with the window size $w_0=200$. We quantify the errors of corrected Hoyer index by the interval:
\begin{equation}\label{formula:output1}
    [m_\varepsilon - 1.96 \sigma_\varepsilon, m_\varepsilon +  1.96 \sigma_\varepsilon], 
\end{equation}
where $\varepsilon_t = \left|g\left(\mathbf R_t\right)-h(\mathbf A)\right|, m_\varepsilon = {\sum_{t=1}^{200} \varepsilon_t}/{200}$ and $\sigma^2_\varepsilon = {\sum_{t=1}^{200}\left(\varepsilon_t-m_\varepsilon\right)^2}/{(200-1)}$.

The result is shown in Figure \ref{fig:robustness}. The fact that Hoyer index takes value within $[0,1]$ naturally justifies the absolute error $\varepsilon_t$. Hence there is no need to further report the relative error of corrected Hoyer index. Indeed, relative error will bias our judgement on the recovery of the underlying sparsity. For an example, if the underlying anomaly is dense and has a small Hoyer index, then the relative error turns out big. Both of the sample mean of the error of corrected Hoyer indices are under $0.08$ uniformly, validating the robustness of the corrected Hoyer index against the noise level. As the noise grows, we visualize the residual matrices as it appears to the naked eye in Figure \ref{fig:anomaly vs variances}. When $\sigma = 6$, it becomes hard for human to instantly judge whether the anomaly is sparse or dense. However, such issue is resolved using corrected Hoyer index. 

\begin{figure}[ht!]
\centering 
\begin{subfigure}[h]{0.45\linewidth}
\includegraphics[width=\linewidth]{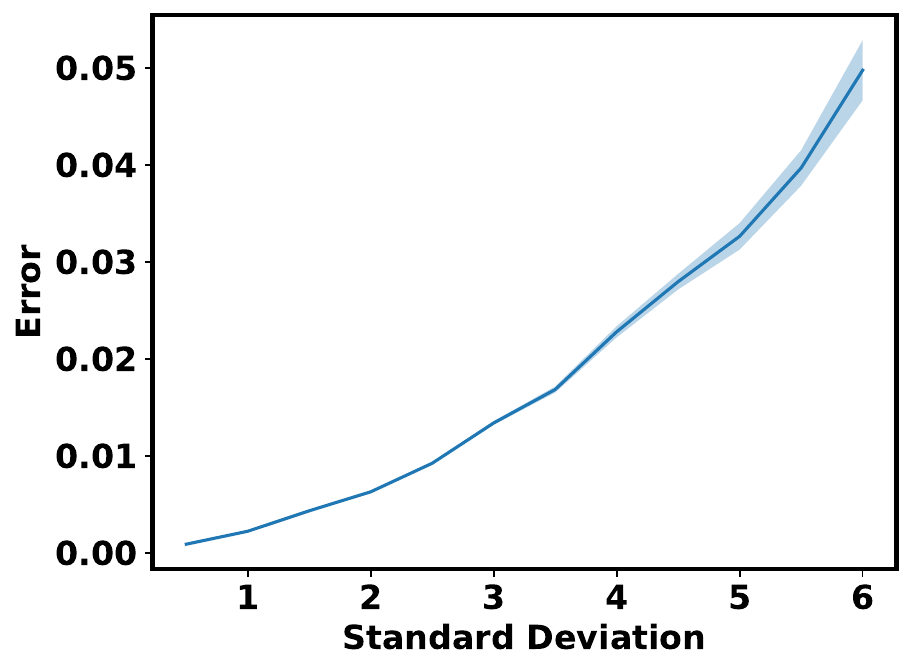}
\caption{dense anomaly}
\end{subfigure}
\begin{subfigure}[h]{0.45\linewidth}
\includegraphics[width=\linewidth]{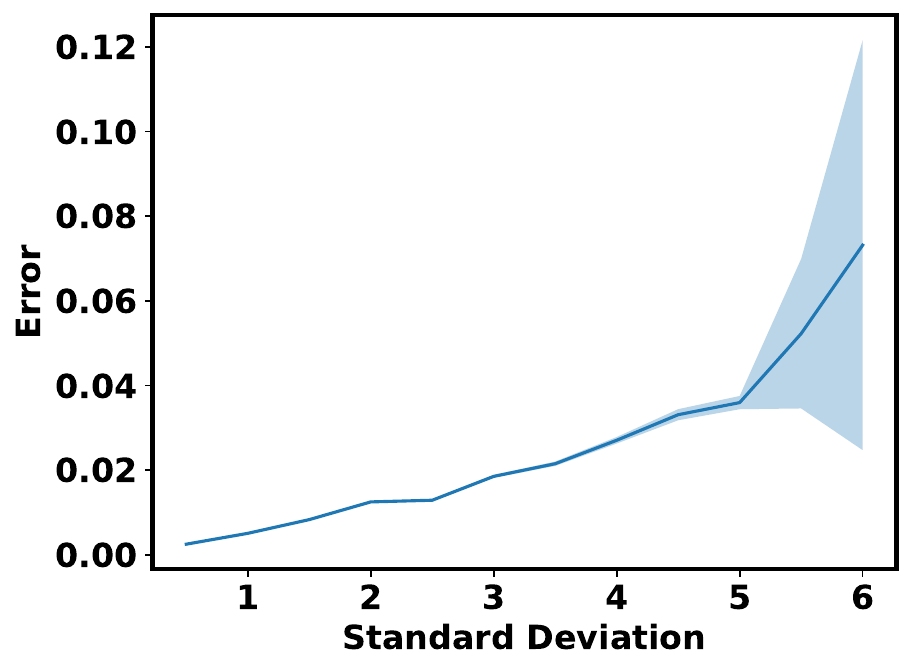}
\caption{sparse anomaly}
\end{subfigure}
\caption{Error $m_\varepsilon$ against Standard Deviation $\sigma$ curves for dense and sparse anomalies with fixed dimension ${p_1}=100,{p_2}=200$. (a) and (b) correspond to the dense and sparse anomaly, respectively. The blue line reports the error $m_\varepsilon$ w.r.t. $\sigma$. The light blue bands visualize the interval $[m_\varepsilon - 1.96 \sigma_\varepsilon, m_\varepsilon +  1.96 \sigma_\varepsilon]$ in \eqref{formula:output1}. }
\label{fig:robustness}
\end{figure}

\begin{figure}[ht!]
\centering 
\begin{subfigure}[h]{0.24\linewidth}
\includegraphics[width=\linewidth]{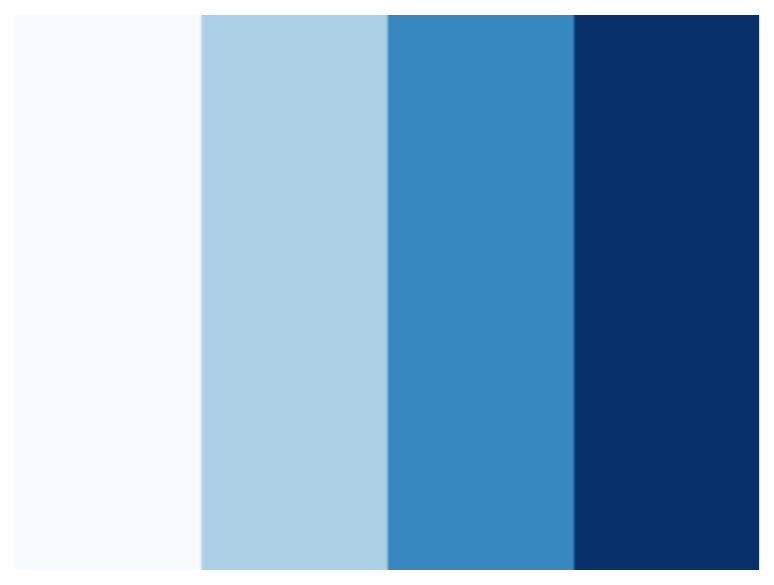}
\caption{dense shift}
\label{fig:dense}
\end{subfigure}
\begin{subfigure}[h]{0.24\linewidth}
\includegraphics[width=\linewidth]{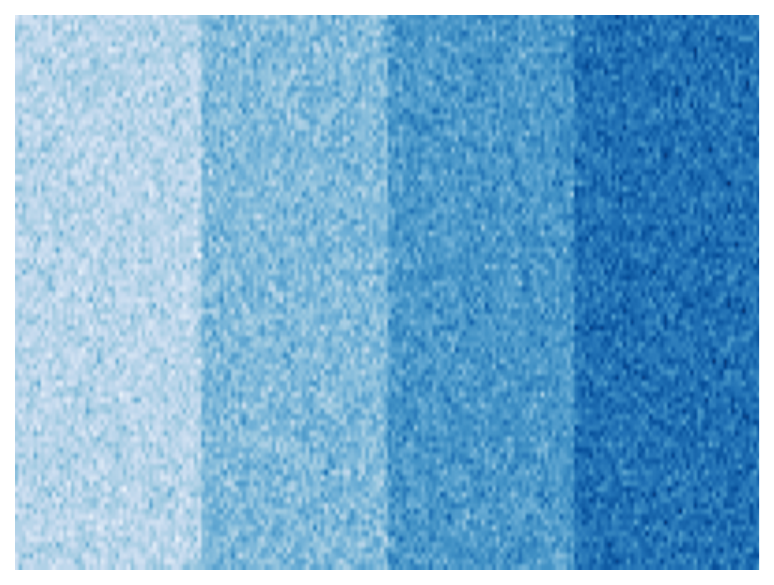}
\caption{$\sigma = 0.5$}
\end{subfigure}
\begin{subfigure}[h]{0.24\linewidth}
\includegraphics[width=\linewidth]{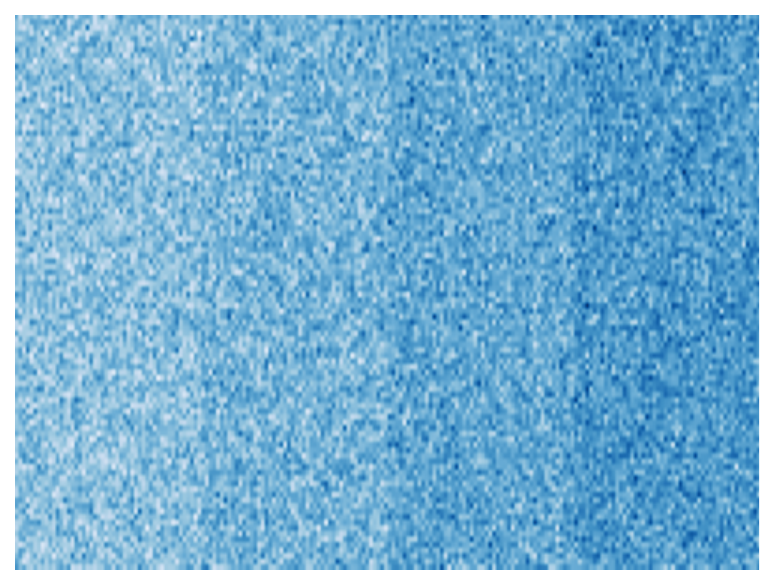}
\caption{$\sigma = 2$}
\end{subfigure}
\begin{subfigure}[h]{0.24\linewidth}
\includegraphics[width=\linewidth]{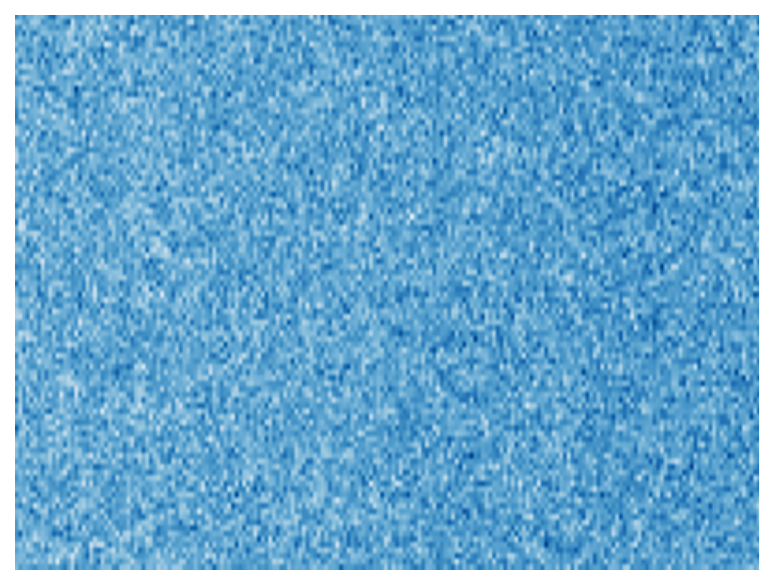}
\caption{$\sigma = 6$}
\end{subfigure}

\begin{subfigure}[h]{0.24\linewidth}
\includegraphics[width=\linewidth]{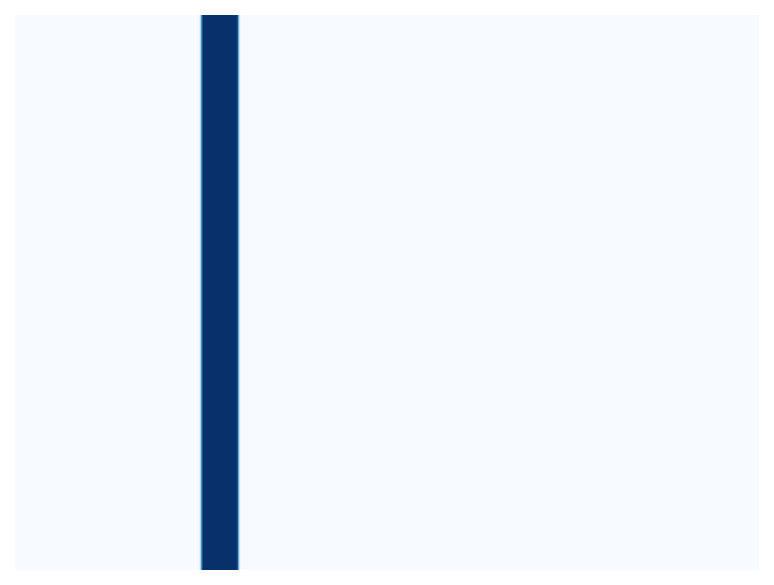}
\caption{sparse shift}
\label{fig:sparse}
\end{subfigure}
\begin{subfigure}[h]{0.24\linewidth}
\includegraphics[width=\linewidth]{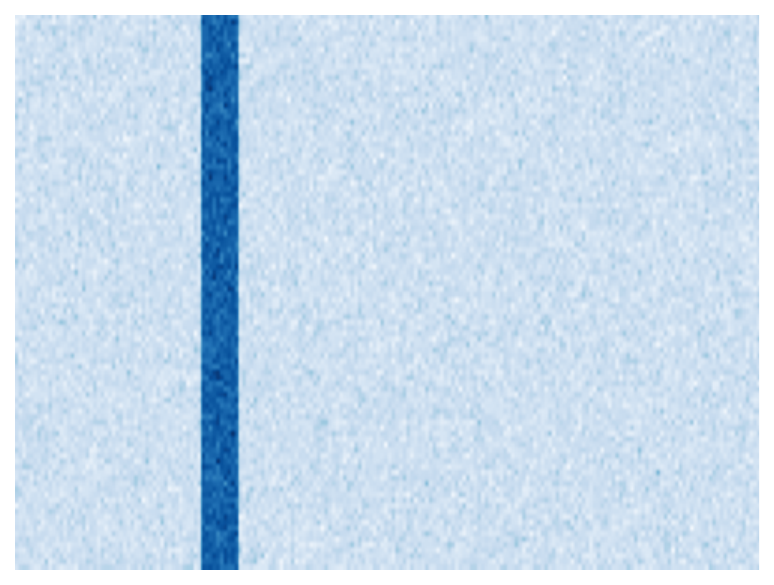}
\caption{$\sigma = 0.5$}
\end{subfigure}
\begin{subfigure}[h]{0.24\linewidth}
\includegraphics[width=\linewidth]{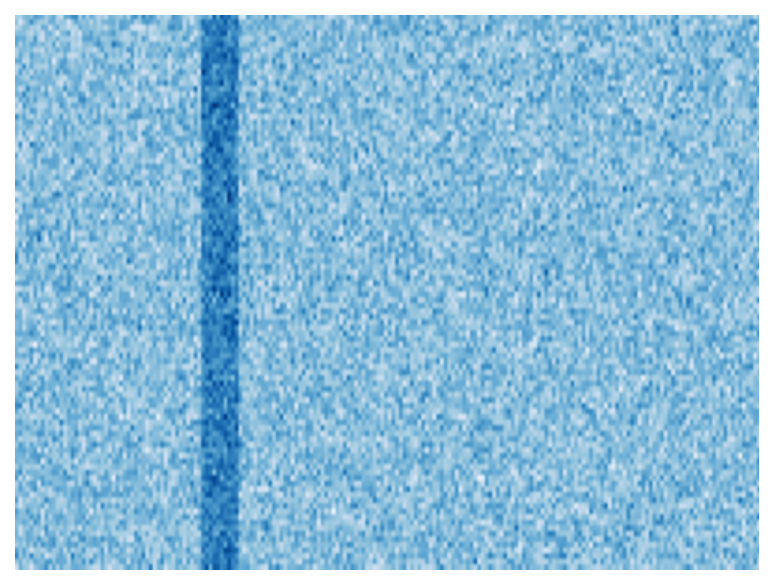}
\caption{$\sigma = 2$}
\end{subfigure}
\begin{subfigure}[h]{0.24\linewidth}
\includegraphics[width=\linewidth]{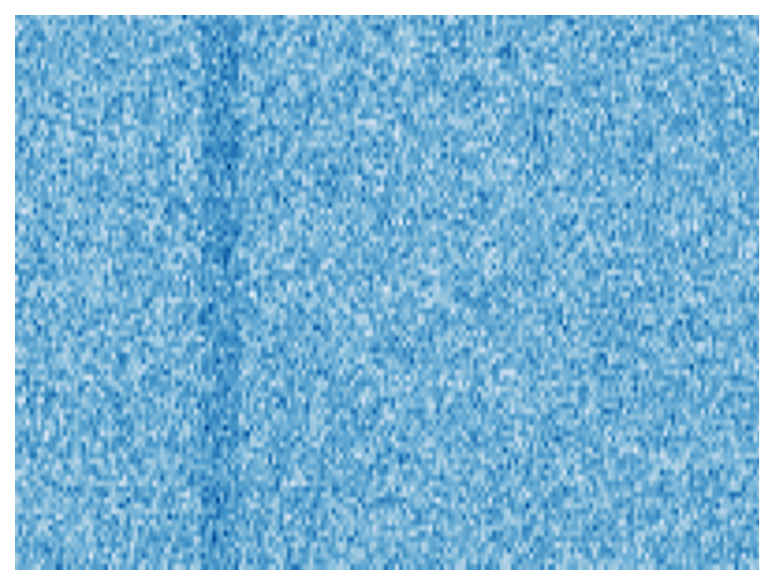}
\caption{$\sigma = 6$}
\end{subfigure}
\caption{Dense or sparse anomaly interrupted by different degrees of white noise matrices $\mathbf e\sim \mathcal{MN}(0,\sigma I_w, \sigma I_p)$. Dense anomaly with $\sigma \in\{0.5,2,5\}$ are shown in (a), (b) and (c), respectively. Sparse anomaly with $\sigma \in\{0.5,2,5\}$ are shown in (d), (e) and (f), respectively. }
\label{fig:anomaly vs variances}
\end{figure}
\subsection{Consistency}
We empirically validate the consistency in Corollary \ref{coro:consistent remedy}. Now we fix the standard deviation $\sigma = 3$ and slides the dimension of the matrix. Consider the basic dimension $\underline{p}_1 = 1, \underline{p}_2 = 2$. We magnify the basic dimension with a multiplier $c\in\{10,20,\ldots, 100\}$. To be precise, given $c$, the dimensions are ${p_1} = c\underline{p}_1, {p_2} = c\underline{p}_2$. Correspondingly, the design on the anomalies have the form:
\begin{equation}\label{formula:anomaly for consistency}
    \begin{aligned}
        &\text{Dense: } A_{ij} = \left\lfloor \frac{4(j-1)}{cp_0} \right\rfloor, \quad i = 1,\ldots,cw_0, \ j = 1,\ldots,cp_0, \\
    &\text{Sparse: } A_{ij} = 5\mathbb I \left(\frac{cp_0}{4}\leq j <\frac{cp_0}{4}+\frac{c}{10}\right),\\
    &\quad \quad \quad \quad i = 1,\ldots,cw_0, \ j = 1,\ldots,cp_0. 
    \end{aligned}
\end{equation}
Both types of the anomaly inherit the patterns in Figure \ref{fig:dense} and \ref{fig:sparse}, with different dimensions. Given the anomaly matrix and a magnification $c$, we simulate the sequence containing IC and OOC matrices with the model \eqref{formula:pair of residual}.
$$
    \mathbf{R}_t = \mathbb{I}(t>0)\mathbf{A}+ \mathbf{e}_t,\quad t=-199,\ldots,200.
$$ 
except for different dimensions.
Still, we report the errors of corrected Hoyer index by the equation \eqref{formula:output1} in terms of banded curves in Figure \ref{fig:consistency}. Note that the standard deviation $\sigma$ has been fixed and the statistics is now dependent on the dimension magnification $c$ instead. In both types, as the dimensions grow, both of errors and fluctuations on errors rapidly decay. The empirical observations coincides with Corollary \ref{coro:consistent remedy}. 

\begin{figure}[ht!]
\centering 
\begin{subfigure}[h]{0.45\linewidth}
\includegraphics[width=\linewidth]{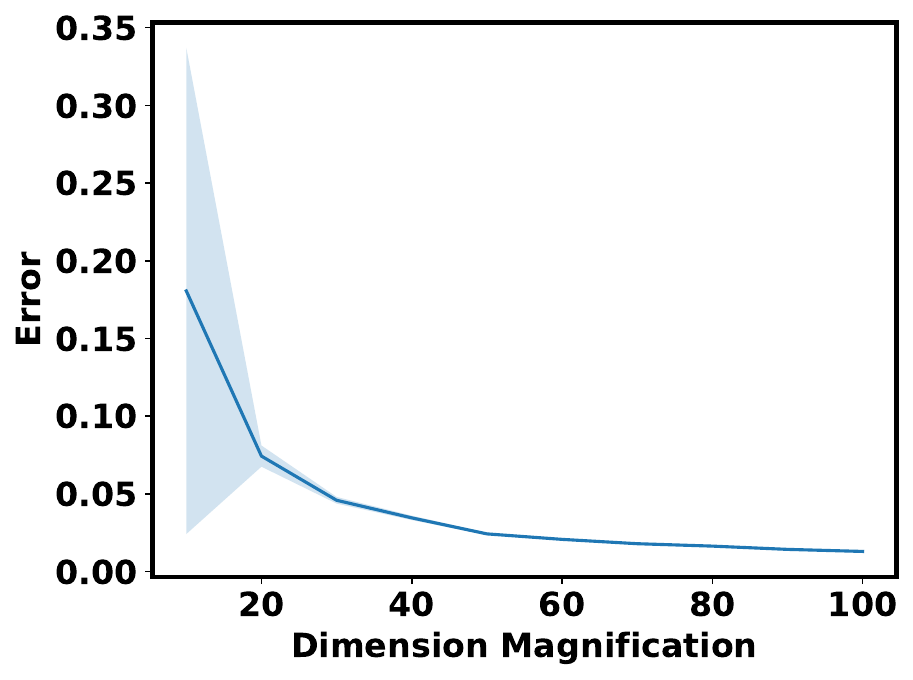}
\caption{dense anomaly}
\end{subfigure}
\begin{subfigure}[h]{0.45\linewidth}
\includegraphics[width=\linewidth]{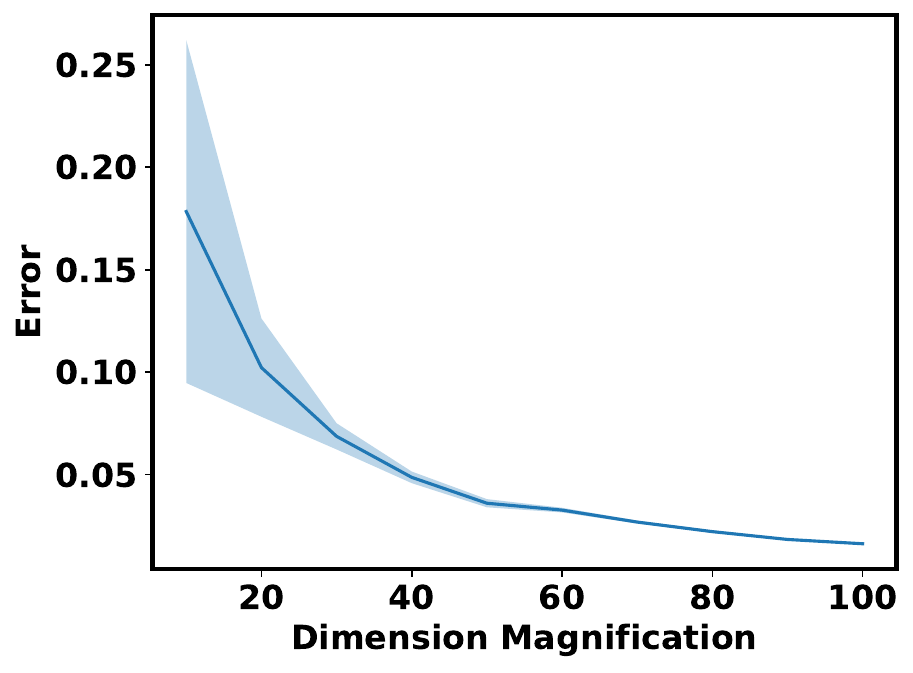}
\caption{sparse anomaly}
\end{subfigure}
\caption{Error $m_\varepsilon$ against Dimension Magnification $c$ curves for dense and sparse anomalies with fixed standard deviation $\sigma = 3$. (a) and (b) correspond to the dense and sparse anomaly, respectively. The magnitudes of blue lines and light blue bands have similar meaning as in Figure \ref{fig:robustness}.  }
\label{fig:consistency}
\end{figure}
\section{Real Data Experiments}\label{sec:realdata}
In this section, we apply the proposed scheme to two real data sets. We will see that the proposed module accurately reflect the underlying sparsity of the anomaly and meanwhile provide a baseline to aid change-point detection. 
\subsection{Abnormal Crowd Activity}
In the first example, we perform the corrected Hoyer index onto a video during which an abnormal crowd activity happened. The raw data is available at \url{http://mha.cs.umn.edu/proj_events.shtml#crowd}. We select one video from the multiple of them. In the first half of video, people are talking normally in the hall. Near the end of the video, something abnormal happens and the crowd flees the hall. After simple processing, the selected video consists of 578 grayscale frames. Each frame has dimensions ${p_1} = 130$ and ${p_2} = 320$. Two of the frames, one normal and another abnormal, have been displayed in the first row of Figure \ref{fig:illustrative cases}. 

For abnormal crowd activity data, we select IC window size $w_0 = 100$ and vary the potential change-point $\hat\tau\in\{201,\ldots,578\}$. Algorithm \ref{algorithm:corrected quick diagnosis} generates a series of corrected Hoyer indices from $\mathbf{R}_{\hat\tau}$. Figure \ref{fig:crowd} computes the behavior of the corrected Hoyer index against time. At first, the corrected Hoyer index is around 1, indicating the residual signal is very sparse, namely no anomaly or sparse anomaly exists. In period $t\in[250,450]$, the index reports a valley-shaped trend that first decreases before $t=350$ and afterwards then increases back to 1. We pause at time $t = 275, 350$ and $450$ and pull out the images. A local anomaly that in the lower left corner, the door opens with light shedding in and then door closes with light vanishing,  happens during period $[250,450]$. When the light sheds in, the white spot in the picture decreases the sparsity of image and thus lowers the corrected Hoyer index. When the light disappears, the process is the opposite. During the local ``light-in-out" process, the crowd remains normal and the decrease of the corrected Hoyer index is slight. At $t=500$, the anomaly occurs and people flee from the scene. Simultaneously, the corrected Hoyer index sharply decreases to $0.4$ until the time $t=550$, that the hall is empty. The empirical study shows the abnormal crowd activity is indeed dense in terms of low Hoyer index. 
\begin{figure}[ht!]
\centering 
\begin{subfigure}[h]{0.9\linewidth}
\includegraphics[width=\linewidth]{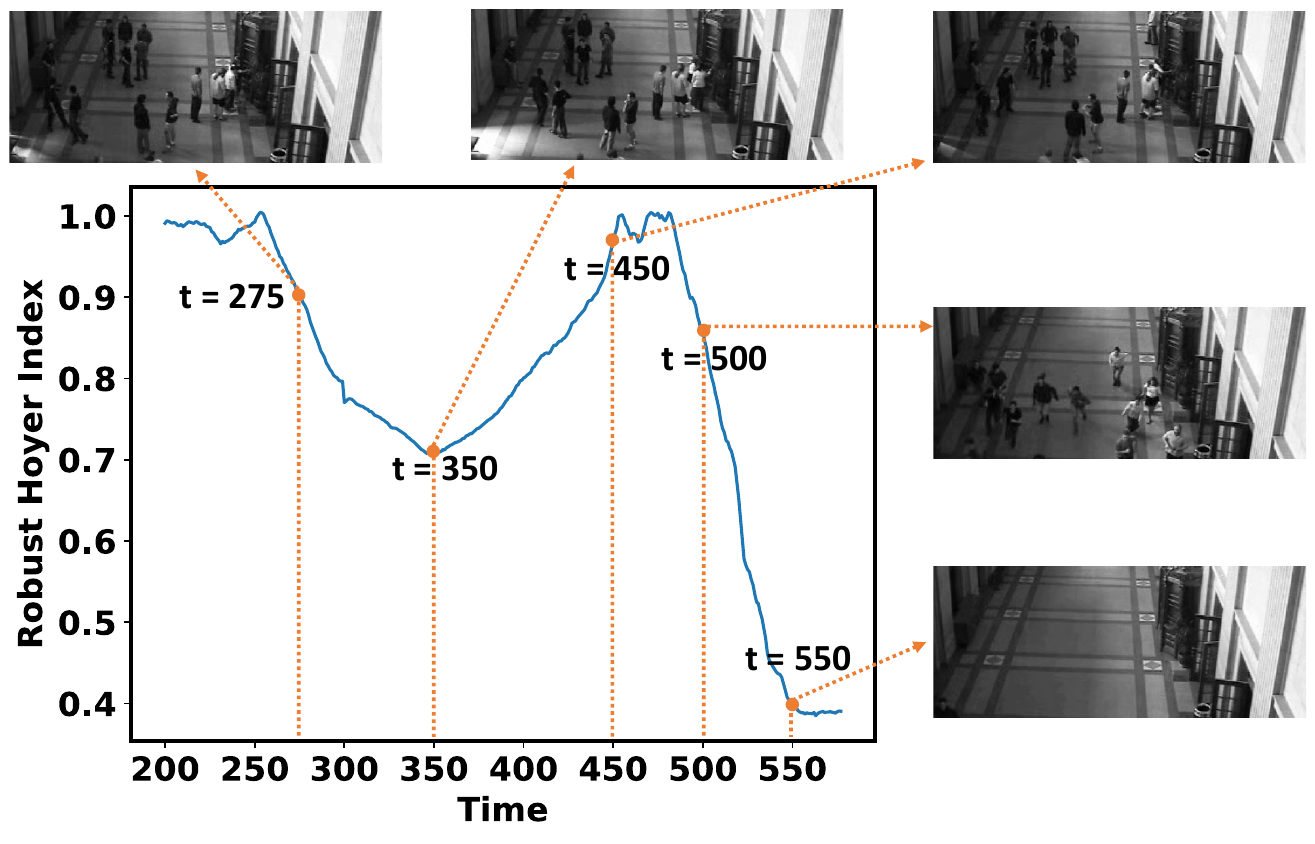}
\end{subfigure}
\caption{Behavior of the corrected Hoyer index againt time for Abnormal Crowd Activity data. }
\label{fig:crowd}
\end{figure}
\subsection{Solar Flare}
In the second example, we study with a stream of solar images. At some period, violent outburst of solar flare can appear, which may potentially damage the power-grids and cause financial loss. This data is publicly available at \url{http://nislab.ee.duke.edu/MOUSSE/index.html}. As a convention, we select $300$ consecutive frames, size of each are $232\times 292$. The normal and abnormal observations have been shown in the second row of Figure \ref{fig:illustrative cases}. 

For solar flare data, we pick IC window size $w_0 = 100$ and then slide $\hat\tau\in\{201,202,\ldots,300\}$. During the process, Figure \ref{fig:solar} shows that the corrected Hoyer index maintains a relatively high magnitude, indicating that the underlying anomaly is sparse. We take three snapshots at time $t = 195,230,280$. At $t=195$, the solar activity is normal. Then at $t=230$, a solar flare happens but the corrected Hoyer index is still high. When $t=280$, the solar activity calms to normal. 
\begin{figure}[ht!]
\centering 
\begin{subfigure}[h]{0.9\linewidth}
\includegraphics[width=\linewidth]{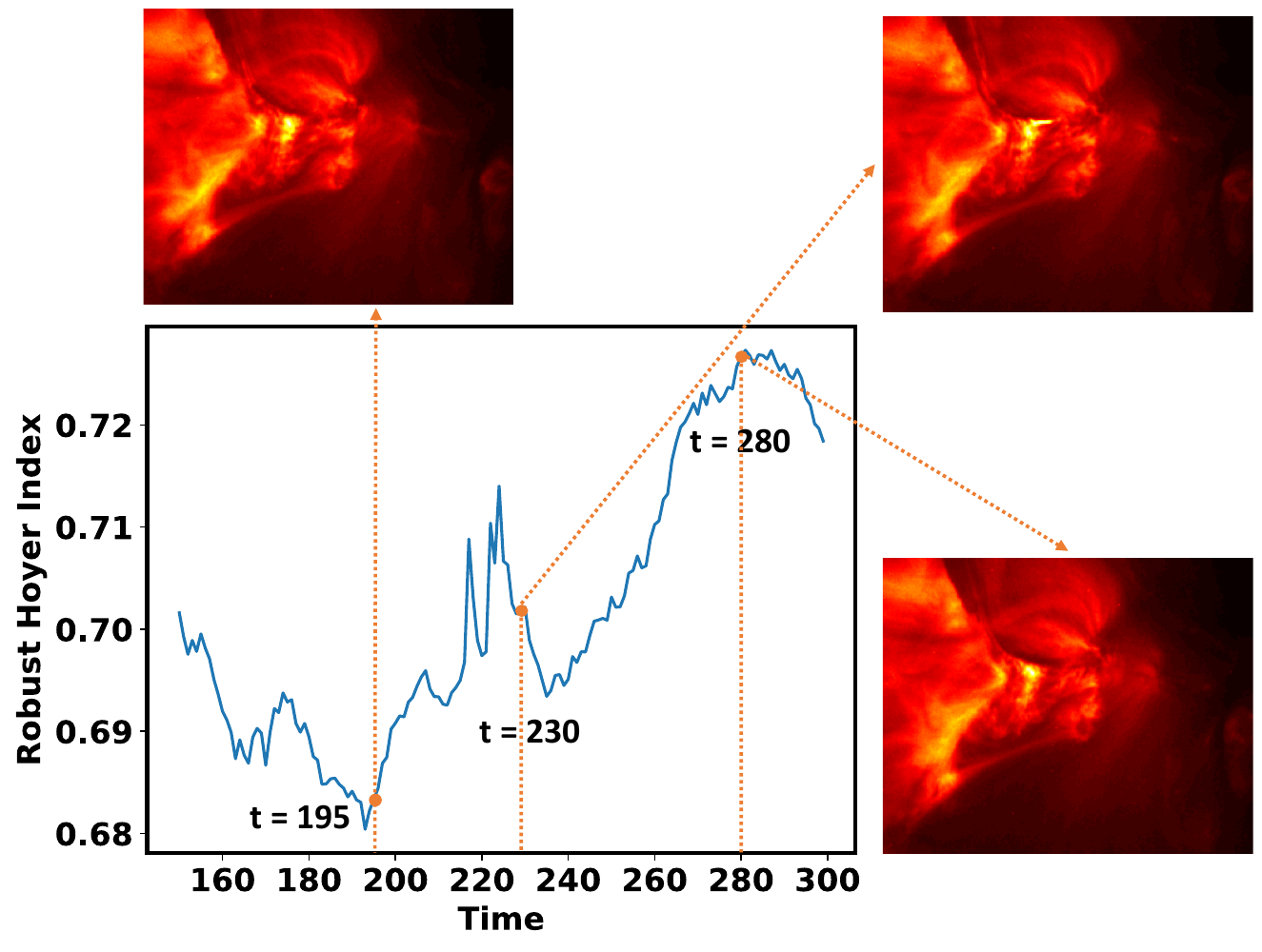}
\end{subfigure}
\caption{Behavior of the corrected Hoyer index against time for Solar Flare data. }
\label{fig:solar}
\end{figure}

Another interesting observation is displayed in Figure \ref{fig:similar pattern}. Though we construct the corrected Hoyer index with the main goal to measure the sparsity of the underlying anomaly, we also observe a by-product that the corrected Hoyer index can reveal similar pattern with the testing statistics proposed in the advanced work \cite{yan2018real}. Thus the corrected Hoyer index has the potential to become a informative feature and aid the monitoring procedure. 
\begin{figure}[ht!]
\centering 
\begin{subfigure}[h]{0.9\linewidth}
\includegraphics[width=\linewidth]{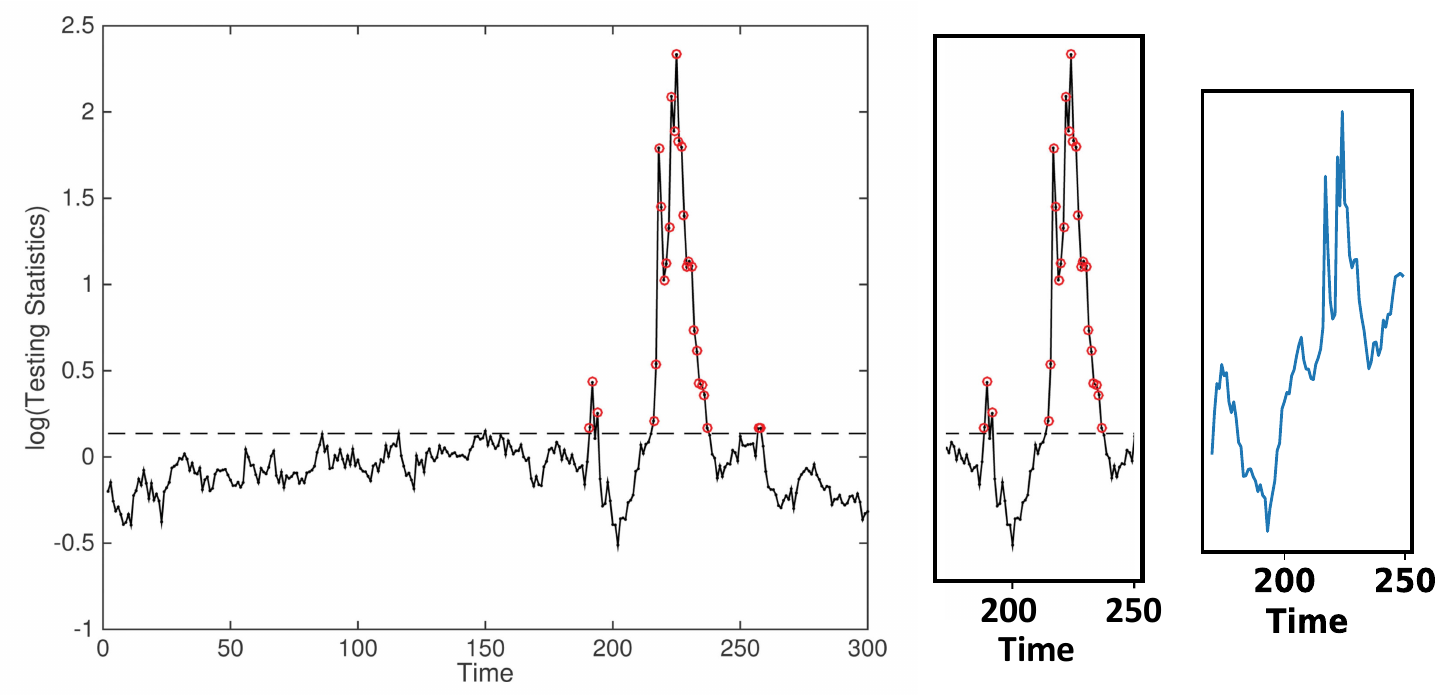}
\end{subfigure}
\caption{The left panel is quoted from Figure 7 in \cite{yan2018real}.The middle panel is the screenshot from the left panel with time ranging in $[170,250)$. The right panel is the screenshot from Figure \ref{fig:solar} with time ranging in $[170,250)$. }
\label{fig:similar pattern}
\end{figure}

\section{Conclusions}\label{sec:conclusion}
In the paper, we propose the corrected Hoyer index which can quickly estimate the sparsity of any OOC image against noise. We theoretically prove the consistency of the proposed index. We demonstrate its robustness against noise level and consistency w.r.t. the dimensions with the simulated experiments. Also the corrected Hoyer index coincides with the prior knowledge of the sparsity of the anomaly in the real image sequences.

\newpage
\bibliographystyle{IEEEtran}
\bibliography{main}

\newpage
\quad
\newpage

\appendix

\begin{proof}[Proof of Lemma~\ref{lem:sparsity of white noise}]
First, we transform $1-h(\mathbf e)$ into the following, 
\begin{equation}
    \begin{aligned}
        1-h(\mathbf e) &= \frac{\left|\sum_{i,j}e_{ij}\right|/\|\mathbf e\|_F - 1}{\sqrt{p_1p_2}-1}\\
        &= \left(\sqrt{p_1p_2}-1\right)^{-1}\frac{\left|\sum_{i,j}e_{ij}\right|/\sqrt{p_1p_2}}{\sqrt{\sum_{i,j}e^2_{ij}/p_1p_2}}. 
    \end{aligned}
\end{equation}
By Law of Large Number, we have
\begin{equation}
    \sum_{i,j}e_{ij}^2/p_1p_2 = \operatorname{O}_{a.s.}(1). 
\end{equation}
By Law of the Iterated Logarithm, we have
\begin{equation}
    \sum_{i,j}e_{ij} = \operatorname{O}_{a.s.}\left(\sqrt{p_1p_2\log\log p_1p_2}\right). 
\end{equation}
Plug the above two orders back into $1-h(\mathbf e)$, we arrive at the conclusion. 
\end{proof}

\begin{proof}[Proof of Theorem~\ref{thm:affection of white noise}]
    By the definition of Hoyer-like index, we have the difference 
    \begin{equation}\label{formula:first step hoyer upperbound}
        \begin{aligned}
            &h(\mathbf R_t)-h(\mathbf A) = \left(\frac{\left|\sum_{i,j}A_{ij}\right|}{\|\mathbf A\|_F}-\frac{\left|\sum_{i,j}R_{tij}\right|}{\|\mathbf{R}_t\|_F}\right)\left(\sqrt{p_1p_2}-1\right)^{-1}\\
            & = \left(\frac{\left|\sum_{i,j}A_{ij}\right|}{\|\mathbf A\|_F}-\frac{\left|\sum_{i,j}R_{tij}\right|}{\|\mathbf{A}\|_F}+\right.\\
            &\quad \left.\frac{\left|\sum_{i,j}R_{tij}\right|}{\|\mathbf{A}\|_F}-\frac{\left|\sum_{i,j}R_{tij}\right|}{\|\mathbf{R}_t\|_F}\right)\left(\sqrt{p_1p_2}-1\right)^{-1}\\
            & \leq \left(\frac{\left|\sum_{i,j}d_{ij}+e_{tij}\right|}{\|\mathbf A\|_F}+\right.\\
            & \quad \left.\left|\sum_{i,j}R_{tij}\right|\left|\frac{1}{\|\mathbf{A}\|_F}-\frac{1}{\|\mathbf{R}_t\|_F}\right|\right)\left(\sqrt{p_1p_2}-1\right)^{-1}, 
        \end{aligned}
    \end{equation}
    where $d_{ij} = (\bm \mu_0-\hat{\bm \mu}_0)_{ij}$. 
    We split and conquer the above term. First, we have 
    \begin{equation}
        \begin{aligned}
        &\frac{\left|\sum_{i,j}d_{ij}+e_{tij}\right|}{\|\mathbf A\|_F}\left(\sqrt{p_1p_2}-1\right)^{-1} \\
            & =  \frac{\left|\sum_{i,j}d_{ij}+e_{tij}\right|/p_1p_2}{\|\mathbf A\|_F/\sqrt{p_1p_2}}\frac{\sqrt{p_1p_2}}{\sqrt{p_1p_2}-1} \xrightarrow{a.s.}{} 0. 
        \end{aligned}
    \end{equation}
    Second, we have 
    \begin{equation}
        \begin{aligned}
            &\left|\sum_{i,j}R_{tij}\right|\left|\frac{1}{\|\mathbf{A}\|_F}-\frac{1}{\|\mathbf{R}_t\|_F}\right|\left(\sqrt{p_1p_2}-1\right)^{-1}\\ 
            &= 
            \frac{\left|\sum_{i,j}R_{tij}\right|}{p_1p_2}\left|\frac{1}{\|\mathbf{A}\|_F/\sqrt{p_1p_2}}-\frac{1}{\|\mathbf{R}_t\|_F/\sqrt{p_1p_2}}\right| \frac{\sqrt{p_1p_2}}{\sqrt{p_1p_2}-1}.
        \end{aligned}
    \end{equation}
    Inside the above formula, we have 
    \begin{equation}
        \begin{aligned}
            &\frac{\left|\sum_{i,j}R_{tij}\right|}{p_1p_2}\xrightarrow{\mathbb P}{}\bar{a}, \quad {\|\mathbf{A}\|^2_F/{p_1p_2}} \to {{\bar{a^2}}}, \\[3pt]
            & \|\mathbf{R}_t\|^2_F/{p_1p_2} \\[3pt]
            & = \frac{\sum_{i,j} A_{ij}^2 + d_{ij}^2 + e_{tij}^2 + 2d_{ij}(A_{ij}+e_{tij})+ 2A_{ij} e_{tij}}{p_1p_2}\\
            &\xrightarrow{a.s.}{\bar{a^2}+\sigma^2}, 
        \end{aligned}
    \end{equation}
    where the convergence in quadratic terms and the interaction term involving $d_{ij}$ is straightforward, and the term ${\sum_{i,j} A_{ij} e_{tij}}/{p_1p_2}$ converges to its expectation in $\mathbb L^2$, thus in probability. 
    Plug the limits back into the inequality \eqref{formula:first step hoyer upperbound}, we have 
    \begin{equation}\label{formula:hoyer upperbound}
        \limsup_{{p_1},{p_2}\to\infty} h(\mathbf R_t)-h(\mathbf A) \leq \frac{\bar{a} \sigma^2}{\sqrt{\bar{a^2}\left(\bar{a^2}+\sigma^2\right)}\left(\sqrt{\bar{a^2}}+\sqrt{\bar{a^2}+\sigma^2}\right)}.
    \end{equation}
    On the other hand, consider the lower bound holds as
        \begin{equation}
        \begin{aligned}
            h(\mathbf R_t)-h(\mathbf A) &\geq \left(
            \left|\sum_{i,j}R_{tij}\right|\left(\frac{1}{\|\mathbf{A}\|_F}-\frac{1}{\|\mathbf{R}_n\|_F}\right)\right.\\
            &\quad \left.-\frac{\left|\sum_{i,j}d_{ij}+e_{tij}\right|}{\|\mathbf A\|_F}\right)\left(\sqrt{p_1p_2}-1\right)^{-1}.
        \end{aligned}
    \end{equation}
    Following the same logic, we have
    \begin{equation}\label{formula:hoyer lowerbound}
        \liminf_{{p_1},{p_2}\to\infty}h(\mathbf R_t)-h(\mathbf A)\geq \frac{\bar{a} \sigma^2}{\sqrt{\bar{a^2}\left(\bar{a^2}+\sigma^2\right)}\left(\sqrt{\bar{a^2}}+\sqrt{\bar{a^2}+\sigma^2}\right)}.
    \end{equation}
    The combination of \eqref{formula:hoyer upperbound} and \eqref{formula:hoyer lowerbound} completes the proof.
\end{proof}

\begin{proof}[Proof of Corollary~\ref{coro:divergent noise}]
By the definition of Hoyer index, we have
$$
\lim_{p_1,p_2\to\infty} h(\mathbf{A}) = 1-\frac{\bar a}{\sqrt{\bar{a^2}}}
$$
By Theorem \ref{thm:affection of white noise}, we have
$$
\begin{aligned}
    &\lim_{\sigma^2\to\infty}\lim_{p_1,p_2\to\infty} h(\mathbf{A+e})-h(\mathbf{A}) \\
    &\stackrel{a.s.}{=}\lim_{\sigma^2\to\infty}\frac{\bar{a} \sigma^2}{\sqrt{\bar{a^2}\left(\bar{a^2}+\sigma^2\right)}\left(\sqrt{\bar{a^2}}+\sqrt{\bar{a^2}+\sigma^2}\right)}\\
    &=\frac{\bar a}{\sqrt{\bar{a^2}}}.
\end{aligned}
$$
Combine the above two, we have
$$
\lim_{\sigma^2\to\infty}\lim_{p_1,p_2\to\infty} h(\mathbf{A+e}) \stackrel{a.s.}{=} 1.
$$
\end{proof}
\end{document}